\newtheorem{theorem}{Theorem}
\newtheorem{proposition}{Proposition}
\theoremstyle{lemma}
\newtheorem{lemma}{Lemma}
\theoremstyle{definition}
\newtheorem{definition}{Definition}
\def\BibTeX{{\rm B\kern-.05em{\sc i\kern-.025em b}\kern-.08em
    T\kern-.1667em\lower.7ex\hbox{E}\kern-.125emX}}
\begin{document}

\title{Entropic Weighted Rank Function\\
}

\author{
\IEEEauthorblockN{ Mohammad Rashid$^1$, {Elahe Ghasemi}$^1$, {Javad B. Ebrahimi}$^{1,2}$}
\IEEEauthorblockA{\textit{$^1$ Department of Mathematical Sciences, Sharif University of Technology, Tehran, Iran} \\
\textit{$^2$ IPM, Institute for Research in Fundamental Sciences, Tehran, Iran}\\
Email: \{m.rashid95, elahe.ghasemi96\}@student.sharif.edu, javad.ebrahimi@sharif.ir }
}

\maketitle

\begin{abstract}
It is known that the entropy function over a set of jointly distributed random variables is a submodular set function. However, not any submodular function is of this form. In this paper, we consider a family of submodular set functions, called weighted rank functions of matroids, and study the necessary or sufficient conditions under which they are entropic.  We prove that weighted rank functions are located on the boundary of the submodularity cone. For the representable matroids over a characteristic 2 field, we show that the integer valued weighted rank functions are entropic. We derive a necessary condition for constant weight rank functions to be entropic and show that for the case of graphic matroids, this condition is indeed sufficient. Since these functions generalize the rank of a matroid, our findings generalize some of the results of Abbe et. al. in \cite{abbe2019entropic} about entropic properties of the rank function of matroids.
\end{abstract}

\begin{IEEEkeywords}
submodular function, matroid, weighted rank, entropy, entropic matroid.
\end{IEEEkeywords}

\section*{Introduction}
For a set of random variables with a given joint distribution, the entropy region is defined as the collection of the entropies of the joint of non-empty subsets of those variables. Entropic region plays an important role in information theory. A geometric representation of the entropy region is the set of points in $\mathbb{R}^{2^n-1}$ where $n$ is the number of random variables.
Consider a point in $\mathbb{R}^{2^n-1}$ with its coordinates indexed by the non-empty subsets of the variables and the value of the $S$-th position is the joint entropy of the variables in the set $S$. We refer to any such point as an entropic point. The set of all entropic points is called the entropic region and is denoted by $\Gamma^*_n$. It is known that the closure of $\Gamma^*_n$ is a convex cone. However, we do not exactly know the structure of the boundary of this set. Another important property of $\Gamma^*_n$ is the submodularity; (See \cite{yeung2008information}.) Any inequality that is obtained by taking a non-negative linear combination of submodularity constraints is called a Shannon type inequality. Let $\Gamma_n$ be the set of all the non-negative points in $\mathbb{R}^{2^n-1}$ which satisfy all the submodularity constraints and is also non-decreasing. $\Gamma_n$ is known to be a convex cone. (\cite{yeung2008information}.) $\Gamma_n$ is called the submodularity cone. Therefore, the submodularity constraint of the entropic point can be written as $\Gamma^*_n \subseteq \Gamma_n$. In \cite{yeung2008information}, it is proven that for $n>3$, the above containment is strict. 
There has been a significant effort to derive inequalities beyond Shannon type ones, that are satisfied for any entropic point. These inequalities, called non-Shannon type, are necessary conditions for a submodular function to be entropic. In this work, we take a different path and instead of finding necessary conditions, we investigate sufficient conditions for a class of submodular functions, called weighted ranks, to be entropic. The reason we are interested in weighted rank functions, besides the fact that they contain the family of rank function of matroids, is that they have nice geometric interpretation which we explain more in Theorem \ref{thm:vertex}.

In \cite{abbe2019entropic}, it is shown that if $\mathcal{M}$ is a matroid on a ground set of size $n$, then the rank function of $\mathcal{M}$ is entropic when the matroid is representable over a characteristic 2 finite field. Our founding is an extension of this result since, for these matroids, we prove that weighted rank functions are entropic and the rank function is a special case of weighted rank functions. We must emphasise that, as we show an example in Figure \ref{fig:fig2} a simple reduction of weighted rank function to the rank function of a new matroid does not work. 

Beyond integral weights, we also consider the case where all the weights are equal non negative numbers. We derive necessary as well as sufficient conditions for the general weighted rank function of binary matroids to be entropic. For graphic matroids, we provide a necessary and sufficient condition for a constant weight rank function to be entropic.

In the rest of this paper, after we briefly review the basic definitions and notations, we state the problem of interest and the main results. Then we present the technical proofs and finally, we discuss two important points about our results. 

\section*{Preliminaries}
In this section we review some standard definitions and basic facts related to convex polytopes \cite{brondsted2012introduction}, matroids \cite{oxley2006matroid} and information theory\cite{cover1991information , yeung2008information}.

\subsection{Matroid Theory}
A \textit{matroid} $\mathcal{M} = (M, \mathcal{I})$ is defined as a pair of a finite set $E$ and a family $\mathcal{I} \subseteq 2^M$ satisfying the following conditions:
\begin{itemize}
    \item $\emptyset \in \mathcal{I}$
    \item $I_1 \subseteq I_2 \in \mathcal{I}$ implies $I_1 \in \mathcal{I}$
    \item if $I_1 , I_2 \in \mathcal{I}$ and $|I_1| \le |I_2|$ then there is an element $e \in I_2 \backslash I_1$, such that $I_1 \cup \{e\} \in \mathcal{I}$
\end{itemize}
The  $M$ is called the \textit{ground set} and each $I \in \mathcal{I}$ is called an \textit{independent set}.

The matroid \textit{rank function} denoted by $r(.)$ is defined as:
$$r(I) = \max \{ |I'|: I' \subseteq I, I \in \mathcal{I} \}$$
where $|I|$ is cardinality of $I$. Matroid rank function enjoys the following properties.
\begin{itemize}
\item $r(A)\in \mathbb{Z}^{\geq 0}$
\item $r(A)\leq 1$ when $A$ is a singleton.
\item $r(.)$ is a non-decreasing function
\item $r(.)$ is a submodular function; that is, $r(A)+r(B)\geq r(A\cup B) + r(A \cap B)$ for all $A,B\subseteq M$.
\end{itemize}

Furthermore, any function which satisfies the above properties is the rank function of some matroid. (See \cite{oxley2006matroid}).

A \textit{circuit} of $\mathcal{M}$ is a nonempty minimial dependent set
and a \textit{base} of $\mathcal{M}$ ,  $\mathcal{B}(\mathcal{M})$,
 is a maximal independent set.

\begin{proposition}
\label{prop:circuit}
If $C_1$ and $C_2$ are distinct circuits and $e \in C_1 \cap C_2$ , then $(C_1 \cup C _2) \backslash e$ contains a circuit.
\end{proposition}

A graphic matroid  is a matroid whose ground set is the edge set of a graph and its independent sets are the edges which form a forest. Another important class of matroids is the class of linear matroids that is defined as follows. Let $M=\{1,2,\ldots,n\}$ and $A$ be a matrix with columns $v_1,\ldots , v_n$ over $\mathbb{F}$. Define:
$$ \mathcal{I} = \{ I \subseteq M | \{ v_i | i \in I \} \text{ are linearly independent vectors} \}.$$
Then, $\mathcal{M}=(M, \mathcal{I})$ is a linear matroid over the field $F$. We denote such a matroid by $\mathcal{M} = \mathcal{M}[A]$ and call $A$ a representative matrix of the matroid. For convenient, we take $I \subseteq M$ to be a set of vectors corresponding to the elements of $I$.

A matroid is representable over a field $\mathbb{F}$ if it is isomorphic to a linear matroid over the field $\mathbb{F}$. 

All graphic matroids are representable over $\mathbb{F}_2$ and $\mathbb{R}$ and their representative matrices are the incidence matrix of the graph.

\subsection{Convex Polytopes}
Next, we review basics of convex polytopes. We follow the. terminology of \cite{brondsted2012introduction}. More results and examples on this topic can be found in \cite{schrijver2003combinatorial}.

A set $C$ is convex when the closed segment between any two
points of $C$ is contained in $C$.  $C$ is called affine when the line through any two distinct points in it lies entirely inside $C$.

A \textit{polyhedron} in $\mathbb{R}^n$ is the intersection of finitely many half-spaces. A \textit{polytope} is a bounded polyhedron. Equivalently, a polytope is the smallest convex set containing a given finite set of points called its. \textit{vertices}. (See \cite{brondsted2012introduction} for a proof.)

Given a convex set $P \subseteq \mathbb{R}^n$ , a point $x \in P$ is an extreme point of P if there do not exist points $u, v \neq x$ in $P$ such that $x$ is a convex combination of $u$ and $v$. For polytopes, both the notions of vertex points and extreme points coincide. (See \cite{brondsted2012introduction} for more details.)

A \textit{cone} is a set $C \subseteq \mathbb{R}^n$ such that it is closed under non-negative scalar multiplication. A \textit{convex cone} is a cone that is also convex.

\subsection{Information theory}
For a random variable $X$ the \textit{entropy} $H(X)$  is defined as:\\
\begin{center}
    $H(X) = - \sum\limits_{x} p(x) \log p(x)$
\end{center}
\begin{definition}
We define $Ber(p)$ to be a Bernoulli variable that takes value 1 with probability $p$ and the value $0$ with probability $q = 1 - p$. 
\end{definition}
When $X$ is distributed as $Ber(\frac{1}{2})$ we have $H(X)=1$.
Let $H_n$ be the $k$-dimensional Euclidean space with the coordinates labeled
by $h_\alpha$ , $\alpha \in 2^{\mathcal{N}_n} \backslash \{\emptyset\}$, where $\mathcal{N}=\{1,2,\ldots,n\}$ and $h_\alpha$ corresponds to the value of $H_\theta(\alpha)$ for any
collection $\theta$ of $n$ random variables. We will refer to $H_n$ as the entropy space
for $n$ random variables. Then an entropy function $H_\theta$ can be represented by a column vector in $H_n$.\\
A column vector $h \in H_n$ is called
entropic if $h$ is equal to the entropy function $H_\theta$ of some collection $\theta$ of $n$ random variables.(\cite{yeung2008information})

\section*{Problem Formulation and Results}
Throughout this section, we assume that $\mathcal{M}=(M,\mathcal{I})$ is a representable matroid over the finite field $\mathbb{F}$ and the ground set $M=\{1,2,\ldots , n\}$. We let $w:M\to \mathbb{R}^{\geq 0}$ be a weight function defined on $M$. 
Given $w$, we define the function $\phi_w : 2^M \to \mathbb{R}$  as follows:\[\phi_{\mathcal{M},w}(A):= \max_{I\subseteq A,I\in \mathcal{I}} {\sum_{a\in I}{w(a)}}\]
We will use $\phi_{w}$ for $\phi_{\mathcal{M},w}$ when $\mathcal{M}$ is clear from the context. For an arbitrary weight function $w$, $\phi_w$ is called the weighted rank of $\mathcal{M}$ with respect to $w$.
\begin{proposition}
\label{pro:submodular}
For every non-negative weight function $w$, $\phi_w$ is a submodular function.
\end{proposition}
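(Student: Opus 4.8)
The plan is to reduce the submodularity of the weighted rank $\phi_w$ to that of the ordinary rank function $r$, which is already listed among the basic properties of $r$, by means of a threshold (``layer-cake'') decomposition. For a threshold $t \ge 0$ and a set $A \subseteq M$, write $A_{\ge t} := \{a \in A : w(a) \ge t\}$. My central claim is the identity
\[
\phi_w(A) \;=\; \int_0^{\infty} r\bigl(A_{\ge t}\bigr)\, dt,
\]
which expresses the weighted rank as a non-negative combination (an integral) of the unweighted ranks of the threshold sets. Since $w$ takes only finitely many values on the finite ground set $M$, this integral is really a finite sum $\sum_j (v_j - v_{j-1})\, r(A_{\ge v_j})$ over the distinct weight levels $v_0 < v_1 < \cdots$, so no measure-theoretic subtleties arise.

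To establish the identity I would appeal to the greedy structure of maximum-weight independent sets. Let $I^\ast \subseteq A$ be an independent set attaining $\phi_w(A)$, chosen greedily by inserting the elements of $A$ in order of decreasing weight. The key fact, which is the main obstacle and is not among the properties quoted in the preliminaries, is that for every threshold $t$ the truncation $I^\ast \cap A_{\ge t}$ is a maximum-size independent subset of $A_{\ge t}$, i.e. $\lvert I^\ast \cap A_{\ge t}\rvert = r(A_{\ge t})$. Granting this, the layer-cake formula $w(a) = \int_0^\infty \mathbf{1}[w(a) \ge t]\, dt$ gives
\[
\phi_w(A) = \sum_{a \in I^\ast} w(a) = \int_0^\infty \bigl\lvert I^\ast \cap A_{\ge t} \bigr\rvert\, dt = \int_0^\infty r\bigl(A_{\ge t}\bigr)\, dt,
\]
as claimed. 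I would either prove the greedy fact by a standard matroid exchange argument or cite it from \cite{oxley2006matroid}.

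The final step is to observe that thresholding commutes with the lattice operations: $(A \cup B)_{\ge t} = A_{\ge t} \cup B_{\ge t}$ and $(A \cap B)_{\ge t} = A_{\ge t} \cap B_{\ge t}$, because each simply intersects with the fixed set $M_{\ge t}$. Hence, for every fixed $t$, the submodularity of the ordinary rank function yields
\[
r\bigl(A_{\ge t}\bigr) + r\bigl(B_{\ge t}\bigr) \ge r\bigl((A \cup B)_{\ge t}\bigr) + r\bigl((A \cap B)_{\ge t}\bigr).
\]
Integrating this pointwise inequality over $t \in [0,\infty)$ and applying the identity from the previous paragraph to each of the four terms gives $\phi_w(A) + \phi_w(B) \ge \phi_w(A \cup B) + \phi_w(A \cap B)$, which is exactly submodularity. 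The only nontrivial ingredient is the greedy/truncation claim; once it is in hand, submodularity of $\phi_w$ is inherited termwise from that of $r$.
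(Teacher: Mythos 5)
Your proposal is correct, but it follows a genuinely different route from the paper's own proof. The paper argues by contradiction: it takes a minimum-size set $A$ violating the diminishing-returns form of submodularity, uses Lemma \ref{lemma:MWB} (deleting the lightest element of a circuit preserves the weighted rank) to reduce to the case where $A$ is independent, and then analyzes the unique circuits $C_1 \subseteq A \cup \{i\}$ and $C_2 \subseteq A \cup \{j\}$, invoking the circuit-elimination axiom (Proposition \ref{prop:circuit}) to bound $\phi_w(A \cup \{i\} \cup \{j\}) \le \phi_w(A) + w_i + w_j - w_a - w_b$ and reach a contradiction. You instead exhibit $\phi_w$ as a non-negative combination of ordinary rank functions via the threshold identity $\phi_w(A) = \int_0^\infty r(A_{\ge t})\,dt$, and let submodularity be inherited termwise, using that thresholding commutes with union and intersection. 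Both arguments ultimately rest on a greedy-type exchange fact: the paper's Lemma \ref{lemma:MWB} on one side, your truncation claim $\lvert I^\ast \cap A_{\ge t}\rvert = r(A_{\ge t})$ on the other; your claim is indeed standard and sound, since the greedy selection restricted to the prefix of elements of weight at least $t$ is a maximal, hence maximum, independent subset of $A_{\ge t}$ (this is essentially the correctness proof of the matroid greedy algorithm, available in \cite{oxley2006matroid}). What your route buys: it is structurally more informative, showing that $\phi_w$ lies in the cone spanned by rank functions of restrictions of $\mathcal{M}$, it handles arbitrary non-negative real weights with no extra effort, and it avoids case analysis on circuits. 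What the paper's route buys: it is self-contained, using only its own Lemma \ref{lemma:MWB} and the circuit axioms rather than appealing to the external greedy optimality theorem.
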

In Appendix \ref{proof:submodularphi}, we present our proof of the fact that  for any non-negative weight $w$, $\phi_w$ is submodular. Therefore, if we consider the vector $\Phi=(\phi (A))_{\emptyset\neq A\in 2^{M}}$, it belongs to $\Gamma_n$, the convex submodularity cone of order $n$.  In the next section we prove that $\Phi$ is in fact a vertex of the polytope $P=\Gamma_n \cap S$ in which $S$ is an Affine subspace of co-dimension $n$ containing all the points in $\mathbb{R}^{2^n-1}$ whose coordinates $\{a\}$ is equal to $w(a)$ for every single point $a\in M$. In other words, if we consider all the points in $\mathbb{R}^{2^n-1}$ that share the same values $w(a)$ at the coordinates indexed by the subsets $\{a\}$. This is summarised in the following theorem.
\begin{theorem}
\label{thm:vertex}
For every choice of non-negative weights $w$ and any matroid $\mathcal{M}$, $\phi_{\mathcal{M},w}$ is a vertex of the polytope $P$ defined as above.
\end{theorem}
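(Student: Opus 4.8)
The plan is to show that $\Phi=\phi_{\mathcal{M},w}$ is an extreme point of $P$; since every point $f\in\Gamma_n\cap S$ satisfies $0\le f(A)\le\sum_{a\in A}w(a)$ (non-negativity together with subadditivity, the latter following from submodularity and $f(\emptyset)=0$), the set $P$ is bounded and hence a polytope, so by the remark in the Preliminaries its extreme points coincide with its vertices. To establish extremality it suffices to prove that the only direction $d\in\mathbb{R}^{2^n-1}$ with $\Phi+\epsilon d,\ \Phi-\epsilon d\in P$ for all sufficiently small $\epsilon>0$ is $d=0$: indeed, any way of writing $\Phi=\lambda u+(1-\lambda)v$ as a convex combination of distinct points $u,v\in P$ would yield such a two-sided direction $d=u-v\neq 0$.

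So fix a direction $d$ as above. For any defining inequality $\langle c,x\rangle\ge b$ of $\Gamma_n$ (non-negativity, monotonicity, or submodularity) that is \emph{tight} at $\Phi$, feasibility of both $\Phi\pm\epsilon d$ forces $\langle c,d\rangle=0$; and the affine constraints cutting out $S$ give $d(\{a\})=0$ for every singleton. Note that $\Phi\in S$ because each singleton is independent (assuming no loops), so $\phi_w(\{a\})=w(a)$. The core of the argument is then to prove $d(A)=0$ for all $A$ by induction on $|A|$, with the convention $d(\emptyset)=0$ and the singletons as the base case.

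For the inductive step with $|A|\ge 2$ I split into two cases according to whether $A$ is independent. If $A$ is dependent, then a maximum-weight independent subset $I^{*}\subseteq A$ is proper; choosing any $a\in A\setminus I^{*}$ gives $I^{*}\subseteq A\setminus\{a\}$, whence $\phi_w(A\setminus\{a\})\ge\phi_w(A)$, which together with monotonicity forces equality. Thus the monotonicity inequality $\phi_w(A)\ge\phi_w(A\setminus\{a\})$ is tight at $\Phi$, so $d(A)=d(A\setminus\{a\})=0$ by induction. If instead $A$ is independent, pick distinct $a,b\in A$ and apply submodularity to $X=A\setminus\{a\}$ and $Y=A\setminus\{b\}$, for which $X\cup Y=A$ and $X\cap Y=A\setminus\{a,b\}$; since all four sets are independent, each weighted rank is the corresponding sum of weights and the submodularity inequality holds with equality at $\Phi$. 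Hence $d(A)=d(A\setminus\{a\})+d(A\setminus\{b\})-d(A\setminus\{a,b\})=0$ by induction. This completes the induction, giving $d=0$ and proving that $\Phi$ is a vertex.

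I expect the main obstacle to be exhibiting, for every set $A$, an active constraint of $\Gamma_n$ that expresses $\phi_w(A)$ in terms of strictly smaller sets; the two observations that make this work are the matroid-theoretic fact that a dependent set always contains an element removable without decreasing the weighted rank (yielding a tight monotonicity constraint) and the triviality that $\phi_w$ is purely additive on independent sets (yielding a tight submodularity constraint). Everything else — boundedness of $P$, the reduction of vertexhood to the nonexistence of a feasible two-sided direction, and the bookkeeping of the induction — is routine. Finally, I note that the argument invokes only the matroid axioms, so representability of $\mathcal{M}$ is not needed for this theorem.
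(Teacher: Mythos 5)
Your proof is correct, and while it shares the paper's overall skeleton (reduce vertexhood to extremality via boundedness of the polytope, then kill every perturbation by induction on the size of the subset, with the affine constraints supplying the singleton base case), the inductive step is genuinely different. The paper fixes the \emph{minimum-weight} element $j$ of $Y$ and splits on whether $r(Y\setminus j)<r(Y)$: in the first case it uses the subadditivity bound $f(Y)\le f(Y\setminus j)+f(\{j\})$ together with the convex-combination trick, and in the second case it invokes Lemma~\ref{lemma:MWB} (the greedy maximum-weight-base lemma) to get $\phi_w(Y)=\phi_w(Y\setminus j)$ and then uses monotonicity. You instead split on whether $A$ is dependent or independent: in the dependent case you pick an element outside a maximum-weight independent subset, which gives the tight monotonicity constraint directly and with no appeal to Lemma~\ref{lemma:MWB} or to weight minimality; in the independent case you exploit the additivity of $\phi_w$ on independent sets to exhibit a tight submodularity constraint among $A\setminus\{a\}$, $A\setminus\{b\}$, $A$, $A\setminus\{a,b\}$ — a constraint the paper never uses. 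The net effect is that your argument is more self-contained (it needs only the matroid axioms and the trivial additivity observation, not the greedy lemma), whereas the paper's version reuses machinery (Lemma~\ref{lemma:MWB}) that it needs anyway for the entropy constructions later. Your two-sided-direction formulation of extremality is equivalent to the paper's two-point convex combination, just cleaner bookkeeping. One further point in your favor: you explicitly flag that $\Phi\in S$ requires $\phi_w(\{a\})=w(a)$, i.e.\ no loops of positive weight; the paper silently makes the same assumption (its base case $f_w(\{i\})=g_w(\{i\})=\phi(\{i\})$ needs it), so this caveat applies equally to both proofs and is worth stating.
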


In particular, this implies that $\Phi$ belongs to the boundary of $\Gamma_n$. \\
Now, the question of interest is that, for a given matroid $\mathcal{M}$ and the weight function $w$, is $\Phi$ an entropic point;  i.e. is $\Phi \in \Gamma^*_n$? \\

In \cite{abbe2019entropic}, Abbe et.al. proved that for any representable matroid over the binary field and the constant weight function $w=1$,  the answer is affirmative. In fact, for the constant function $w=1$,  $\phi_w$ is simply the usual rank function of the matroid. In this paper we prove the following results.

\begin{theorem}
\label{thm:main}
Let $\mathcal{M}$ be a representable matroid over the field $\mathbb{F}_2$ and $w$ is any non-negative integer weight function on $M$. Then, $\phi_{w, \mathcal{M}}$ is entropic. \end{theorem}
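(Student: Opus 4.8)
The plan is to reduce the weighted case to the unweighted (rank-function) case of Abbe et al.\ by a layer-cake decomposition of the weights, and then to assemble the pieces using the additivity of entropy over independent random variables. The starting observation is the identity
\[
\phi_{w,\mathcal{M}}(A)=\sum_{t=1}^{W} r\bigl(A\cap M_{\geq t}\bigr),\qquad M_{\geq t}:=\{i\in M: w(i)\geq t\},
\]
where $W=\max_{i}w(i)$ and $r$ is the ordinary rank function of $\mathcal{M}$. This should hold because, writing $w(i)=\sum_{t=1}^{W}\mathbf{1}[w(i)\geq t]$ and running the matroid greedy algorithm on $A$ in order of decreasing weight, the elements selected among those of weight at least $t$ form a maximal independent subset of $A\cap M_{\geq t}$; summing their cardinalities over $t$ recovers the maximum weight $\phi_{w,\mathcal{M}}(A)$. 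Integrality of $w$ is exactly what makes this a finite sum of integer layers.

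First I would realize each summand as the entropy profile of a single collection of binary random variables. Fix $t$ and let $U^{(t)}$ be uniform on $\mathbb{F}_2^{d}$, where $v_1,\dots,v_n\in\mathbb{F}_2^{d}$ is a representation of $\mathcal{M}$. Put $Y^{(t)}_i=\langle v_i, U^{(t)}\rangle$ for $i\in M_{\geq t}$ and $Y^{(t)}_i\equiv 0$ otherwise. Since the masked coordinates are constant, for every $A$ we have $H\bigl((Y^{(t)}_i)_{i\in A}\bigr)=H\bigl((\langle v_i,U^{(t)}\rangle)_{i\in A\cap M_{\geq t}}\bigr)$, and the latter equals the dimension of the span of $\{v_i: i\in A\cap M_{\geq t}\}$, i.e.\ $r(A\cap M_{\geq t})$; this is precisely the computation underlying the $w=1$ result of \cite{abbe2019entropic}, since a uniform $U^{(t)}$ pushed through the linear map with rows $v_i$ is uniform on an image whose size is $2$ to the power of the rank.

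Then I would take the layers to be mutually independent. Let $U^{(1)},\dots,U^{(W)}$ be independent, and set $X_i=(Y^{(1)}_i,\dots,Y^{(W)}_i)$. Because $Y^{(t)}_\bullet$ depends only on $U^{(t)}$, the blocks are independent across $t$, so for every $A$
\[
H\bigl((X_i)_{i\in A}\bigr)=\sum_{t=1}^{W}H\bigl((Y^{(t)}_i)_{i\in A}\bigr)=\sum_{t=1}^{W} r(A\cap M_{\geq t})=\phi_{w,\mathcal{M}}(A),
\]
which shows $\Phi\in\Gamma^*_n$. The step I expect to carry the real content is the layer-cake identity: it is what replaces the weighted rank by a sum of (masked) unweighted ranks, and the passage to a sum on the combinatorial side corresponds exactly to taking independent copies on the probabilistic side. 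This is also why the naive attempt to realize $\phi_{w,\mathcal{M}}$ as the rank of one larger matroid (cf.\ Figure~\ref{fig:fig2}) cannot work: a single matroid rank is a single entropic profile, whereas $\phi_{w,\mathcal{M}}$ is genuinely a sum of several such profiles, a structure the entropy region accommodates through independence but a single rank function does not.
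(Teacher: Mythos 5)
Your proof is correct, and although the random variables you build are literally the same as the paper's --- your $X_i=(\langle v_i,U^{(1)}\rangle,\dots,\langle v_i,U^{(w(i))}\rangle,0,\dots,0)$ is exactly the paper's $Y_e=(X_1v_e,\dots,X_{w_e}v_e)$ from Equation \ref{equ:Y_e}, padded with constants --- the verification runs along a genuinely different axis. The paper accounts \emph{element-wise} (column by column): it proves a max-weight-basis lemma (Lemma \ref{lemma:MWB}), a circuit-removal lemma saying that deleting the lightest element of a circuit leaves the joint entropy unchanged (Lemma \ref{lemma:removeelement}), and an additivity lemma for independent sets (Lemma \ref{lemma:summation}), and chains them to reduce $H(\bigcup_{e\in I}Y_e)$ to $\sum_{e\in B}w_e$ over the max-weight basis $B\subseteq I$. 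You account \emph{layer-wise} (row by row): all the combinatorics is pushed into the layer-cake identity $\phi_{w,\mathcal{M}}(A)=\sum_{t=1}^{W}r(A\cap M_{\geq t})$, after which the entropy computation is immediate, since each layer is precisely the $w=1$ construction of \cite{abbe2019entropic} with profile $r(\,\cdot\,\cap M_{\geq t})$ and the layers are independent, so entropies add. Your route buys modularity (the unweighted theorem is invoked as a black box, with no circuit arguments redone), makes transparent why integrality is the natural hypothesis (finitely many integer layers), and transfers verbatim to $\mathbb{F}_q$ with weights that are multiples of $\log q$; the paper's route buys reusable machinery, since Lemma \ref{lemma:MWB} also drives Theorem \ref{thm:vertex} and the circuit-removal idea reappears in Theorem \ref{thm:inverse}. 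One point you should tighten: the layer-cake identity deserves a self-contained proof rather than an appeal to how greedy behaves; the cleanest version is to observe that for any independent $I\subseteq A$ one has $w(I)=\sum_{t}|I\cap M_{\geq t}|\leq\sum_{t}r(A\cap M_{\geq t})$, with equality when $I$ is the greedy output, because greedy (processing weights in decreasing order) leaves $I\cap M_{\geq t}$ a maximal, hence maximum, independent subset of $A\cap M_{\geq t}$ for every $t$; this simultaneously proves greedy optimality and the identity.
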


We also consider the non-integral weight function over certain matroids and derive necessary or sufficient conditions under which $\Phi$ is entropic.

\begin{theorem}
\label{thm:logk}
$\mathcal{M} = (M , \mathcal{I})$ is a matroid with at least one circuit and $w$ is constant weight function on $M$. Then, $\phi_w$ is entropic if $w = \log k$ for $k \in \mathbb{N}$.
\end{theorem}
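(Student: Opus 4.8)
The plan is to first simplify $\phi_w$ and then exhibit an explicit collection of random variables realizing it. Since $w$ is the constant $\log k$, for every $A\subseteq M$ we have $\phi_w(A)=\max_{I\subseteq A,\,I\in\mathcal I}\sum_{a\in I}w(a)=(\log k)\,\max_{I\subseteq A,\,I\in\mathcal I}|I|=(\log k)\,r(A)$, where $r$ is the rank function of $\mathcal M$. Thus it suffices to construct random variables $X_1,\dots,X_n$ with $H\big((X_i)_{i\in A}\big)=(\log k)\,r(A)$ for every nonempty $A$. I would build these from a linear representation: fix an integral $d\times n$ matrix $V$ with columns $v_1,\dots,v_n$ representing $\mathcal M$, draw $U$ uniformly from $\mathbb{Z}_k^{\,d}$, and set $X_i:=\langle v_i,U\rangle \bmod k$.

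The entropy computation is then purely group-theoretic. For fixed $A$ the map $\psi_A:\mathbb{Z}_k^{\,d}\to\mathbb{Z}_k^{\,A}$, $U\mapsto(X_i)_{i\in A}$, is a homomorphism of finite abelian groups, so the pushforward of the uniform law of $U$ is uniform on the image subgroup $\mathrm{Im}\,\psi_A$; hence $H\big((X_i)_{i\in A}\big)=\log|\mathrm{Im}\,\psi_A|$. To evaluate $|\mathrm{Im}\,\psi_A|$ I would pass to the Smith normal form of the integer submatrix $V_A$ (whose rows are the $v_i^{\top}$, $i\in A$): writing $V_A=PDQ$ with $P,Q$ unimodular over $\mathbb{Z}$ and $D$ carrying the invariant factors $d_1\mid\cdots\mid d_s$, the unimodular factors stay invertible modulo $k$, so $|\mathrm{Im}\,\psi_A|=\prod_{i=1}^{s}k/\gcd(k,d_i)$. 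When the chosen representation is totally unimodular, every nonzero invariant factor equals $1$ and $s$ equals the rational rank of $V_A$, which is exactly $r(A)$; therefore $|\mathrm{Im}\,\psi_A|=k^{\,r(A)}$ and $H\big((X_i)_{i\in A}\big)=(\log k)\,r(A)=\phi_w(A)$, as required.

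The one step that genuinely needs care — and which I expect to be the main obstacle — is securing the identity $|\mathrm{Im}\,\psi_A|=k^{\,r(A)}$ simultaneously for all $A$ and all $k\in\mathbb{N}$. This is precisely where an integral, totally unimodular representation is indispensable: with only an $\mathbb{F}_q$-representation, reduction modulo $k$ preserves all the matroid ranks only when $k$ is a power of $q$ (one then also invokes extension fields), and for other $k$ the claim can fail outright — for instance $U_{2,4}$ is representable over $\mathbb{F}_3$ yet $(\log 2)\,r$ is not entropic. For graphic matroids this obstacle evaporates, since the incidence matrix is totally unimodular.

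Indeed, for a graphic matroid it is cleanest to realize the construction directly and avoid Smith normal form altogether: assign to each vertex $v$ an independent uniform variable $Y_v\in\mathbb{Z}_k$ and, to each edge $e=uv$, set $X_e:=Y_u-Y_v \bmod k$. For a set of edges $A$ the tuple $(X_e)_{e\in A}$ is again the image of a group homomorphism and ranges uniformly over a subgroup of order $k^{\,|V|-c(A)}$, where $c(A)$ counts the connected components of $(V,A)$; since $|V|-c(A)=r(A)$, this gives $H\big((X_e)_{e\in A}\big)=(\log k)\,r(A)$ for every $k$, completing the argument in the graphic case. I would present the general representable case via the integral Smith-normal-form computation and then specialize, flagging the rank-preservation-modulo-$k$ step as the crux throughout.
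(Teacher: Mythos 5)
You have proved the converse of what the paper actually proves here. Despite the wording ``$\phi_w$ is entropic \emph{if} $w=\log k$'', Theorem~\ref{thm:logk} is the paper's \emph{necessary} condition (the abstract says so explicitly: ``we derive a necessary condition for constant weight rank functions to be entropic,'' and Theorem~\ref{thm:inverse} is later called its converse). The paper's proof runs in the opposite direction from yours: it \emph{assumes} $\phi_w$ is entropic with constant weight $w_0$, takes a circuit $C=\{e_1,\dots,e_m\}$, and uses $H(\bigcup_{i\in C}X_i)=(m-1)w_0$ together with the fact that any $m-1$ of the circuit variables have joint entropy $(m-1)w_0$ (hence are mutually independent, with every point of the product of their supports having positive probability) to conclude $H(X_j\mid \text{rest of }C)=0$; a counting argument over these positive-probability configurations then forces each $X_i$ to be uniform on its support and all supports to share a common cardinality $k$, whence $w_0=H(X_i)=\log k$. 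This also explains the hypothesis ``at least one circuit,'' which is essential for the necessity argument but plays no role in your direction. Your proposal contains no step of this kind, so measured against the paper's own proof of this theorem it misses the entire argument.

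Ironically, your own aside demonstrates that the literal reading you adopted cannot be the intended statement: as you observe, $U_{2,4}$ with $k=2$ gives $w=\log 2$ with $(\log 2)\,r$ not entropic, so ``$w=\log k\Rightarrow\phi_w$ entropic'' is false for general matroids --- which is exactly why you were forced to restrict to totally unimodular representations. What you did prove, correctly, is (a strengthening of) Theorem~\ref{thm:inverse}: your graphic construction $X_e=Y_u-Y_v \bmod k$ is essentially the paper's proof of that theorem, except that the paper verifies the entropy of a spanning tree by peeling off a root edge and iterating, whereas you compute the image subgroup of order $k^{|V|-c(A)}$ in one stroke, uniformly in $A$ --- cleaner, and it avoids the tree induction entirely. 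Your Smith-normal-form argument then extends the sufficiency from graphic to all regular matroids, a genuine generalization the paper does not state. But as a proof of Theorem~\ref{thm:logk} as the paper understands and proves it, the proposal establishes the wrong implication, and the necessity direction (entropic $\Rightarrow$ uniform variables of common support size $\Rightarrow$ $w=\log k$) would still have to be supplied.
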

Finally, we prove the converse of Theorem \ref{thm:logk} for graphic matroids.
\begin{theorem}
\label{thm:inverse}
 $\mathcal{M} = (M , \mathcal{I})$ is a graphic matroid of weight function $w = \log k$ for $k \in \mathbb{N}$. $\phi_w$ is entropic.
\end{theorem}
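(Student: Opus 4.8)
The plan is to prove this by an explicit construction that exhibits a collection of random variables whose joint entropies equal $\phi_w$. Since $w=\log k$ is constant, the weighted rank collapses to a scalar multiple of the ordinary rank: for every $A\subseteq M$ the maximizing independent subset has cardinality $r(A)$, so $\phi_w(A)=\log k\cdot r(A)$. Thus it suffices to realize the function $\log k\cdot r$ as an entropy vector, where $r$ is the rank function of the graphic matroid of some graph $G=(V,E)$ with $E=M$.

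The construction I would use is the following. Fix an arbitrary orientation of each edge of $G$, and let $\{Y_v\}_{v\in V}$ be independent random variables, each uniformly distributed on the cyclic group $\mathbb{Z}_k$. For an edge $e$ oriented from $u$ to $v$, define $X_e:=Y_v-Y_u\pmod k$. The claim to verify is that for every nonempty $A\subseteq M$ one has $H\big((X_e)_{e\in A}\big)=\log k\cdot r(A)$.

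To compute these entropies I would observe that the map $\Psi_A\colon \mathbb{Z}_k^{V}\to\mathbb{Z}_k^{A}$ sending $(Y_v)_v$ to $(X_e)_{e\in A}$ is a homomorphism of abelian groups, and that $(X_e)_{e\in A}$ is the image of the uniform distribution on $\mathbb{Z}_k^V$ under $\Psi_A$. Hence $(X_e)_{e\in A}$ is uniform on the subgroup $\mathrm{Im}\,\Psi_A$, so $H\big((X_e)_{e\in A}\big)=\log|\mathrm{Im}\,\Psi_A|=\log\big(k^{|V|}/|\ker\Psi_A|\big)$. The kernel consists exactly of the assignments $(Y_v)$ with $Y_u=Y_v$ for every edge $(u,v)\in A$, i.e.\ those that are constant on each connected component of the subgraph $(V,A)$; therefore $|\ker\Psi_A|=k^{c(A)}$, where $c(A)$ denotes the number of connected components of $(V,A)$ (isolated vertices included). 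Substituting gives $H\big((X_e)_{e\in A}\big)=(|V|-c(A))\log k=r(A)\log k=\phi_w(A)$, as desired. I would also record the degenerate cases as sanity checks: a loop at $v$ yields $X_e=0$ and contributes rank $0$, while two parallel edges give identical variables and joint rank $1$, both consistent with the formula.

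The computation itself is short; the point I expect to be the main obstacle --- and the reason the statement is restricted to graphic matroids --- is handling an \emph{arbitrary} $k\in\mathbb{N}$ rather than only primes. When $k$ is prime, $\mathbb{Z}_k$ is a field and one could simply invoke the fact that the rank of the restricted incidence matrix over $\mathbb{F}_k$ equals the matroid rank. For composite $k$ that field argument fails, so the essential idea is to replace it with the purely group-theoretic kernel computation above, whose only input is that membership in the kernel forces $Y$ to be constant along each component, and this holds over $\mathbb{Z}_k$ for every $k$. Tying $|\ker\Psi_A|$ to the component count $c(A)$, and hence to the graphic-matroid rank $r(A)=|V|-c(A)$, is the step that makes the construction match $\phi_w$ exactly.
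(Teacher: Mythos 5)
Your proposal is correct, and the underlying construction is exactly the paper's: orient the edges, attach independent uniform variables on $\mathbb{Z}_k$ to the vertices, and assign to each edge the (mod $k$) difference of its endpoint variables. Where you genuinely diverge is in how the joint entropies are evaluated. The paper argues dynamically: it first uses the cycle relations $\sum_j \alpha_j Y_{i_j}=0$ to show that deleting an edge from each cycle leaves the joint entropy unchanged (reducing to a spanning tree), and then performs a leaf-peeling induction on the tree, showing at each step that the variable on a pendant edge is conditionally ``fresh'' given the rest, so the entropies add up to $|T|\log k$. You instead compute every joint entropy in closed form in one shot: the map $\Psi_A\colon\mathbb{Z}_k^{V}\to\mathbb{Z}_k^{A}$ is a group homomorphism, the pushforward of the uniform distribution is uniform on $\mathrm{Im}\,\Psi_A$, and the first isomorphism theorem plus the observation that $\ker\Psi_A$ consists of assignments constant on components gives $H\bigl((X_e)_{e\in A}\bigr)=\bigl(|V|-c(A)\bigr)\log k=r(A)\log k$. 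This buys you two things over the paper's route: the argument handles \emph{every} subset $A$ uniformly and explicitly (the paper's write-up computes the entropy of the full edge set and leaves the restriction to subgraphs implicit), and it isolates cleanly why composite $k$ is unproblematic --- the only algebraic input is a kernel count over an abelian group, with no appeal to field structure. The paper's route, on the other hand, stays closer to the conditional-entropy machinery (its Lemmas on circuit removal and additivity) that it reuses for the general binary-matroid theorem, so it fits more naturally into the paper's overall architecture. Both proofs are valid.
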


\section*{Technical Details of the Proofs}
In this section after we stated some useful lemmas, we prove the main results of the paper stated in the previous section. 

\begin{lemma}
\label{lemma:MWB}
Let $\mathcal{M}$ be a matroid on the ground set $M$. Suppose that we start from $M$, sequentially pick a circuit, and remove the element with minimum weight from it. Then, the resulting subset  is a maximum weighted base $B$.
\end{lemma}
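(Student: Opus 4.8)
The plan is to analyze the described procedure as a \emph{reverse--delete} (dual greedy) algorithm and to establish its correctness by a single exchange argument carried along an induction on the number of deletions. I denote by $S$ the current ground set (initially $S=M$), and I maintain throughout two invariants: (i) $S$ is \emph{spanning}, i.e.\ $r(S)=r(M)$; and (ii) $S$ contains some maximum weight base of $\mathcal{M}$. A minor but necessary bookkeeping point, which I would flag up front, is the consistent use of $\mathcal{M}$ versus its restriction $\mathcal{M}|_S$: because invariant (i) keeps $S$ spanning, the bases of $\mathcal{M}|_S$ are exactly the bases of $\mathcal{M}$ contained in $S$, so ``maximum weight base'' is unambiguous throughout.

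First I would settle termination and the conclusion that the output is a base. Whenever we delete a minimum--weight element $e$ of a circuit $C\subseteq S$, the set $C\setminus\{e\}$ still spans $e$: a circuit is minimally dependent, so $r(C\setminus\{e\})=r(C)$ and hence $e\in\mathrm{cl}(C\setminus\{e\})\subseteq \mathrm{cl}(S\setminus\{e\})$. Therefore $r(S\setminus\{e\})=r(S)$ and invariant (i) is preserved. Since $|S|$ strictly decreases at each step, the process stops, and it stops precisely when $S$ contains no circuit, i.e.\ when $S$ is independent. An independent spanning set is a base, so the output $B$ is a base; combined with invariant (ii), which places a maximum weight base inside $B$ of the same cardinality $r(M)$ and hence forces equality, this yields that $B$ is a maximum weight base.

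The heart of the argument, and the step I expect to be the main obstacle, is preserving invariant (ii): I must show that deleting a minimum--weight element $e$ of a circuit $C$ cannot destroy every maximum weight base living inside $S$. Let $B^*\subseteq S$ be a maximum weight base; if $e\notin B^*$ there is nothing to prove, so assume $e\in B^*$. I claim there is an element $f\in C\setminus\{e\}$ with $f\notin\mathrm{cl}(B^*\setminus\{e\})$. Indeed, if every $f\in C\setminus\{e\}$ lay in $\mathrm{cl}(B^*\setminus\{e\})$, then $\mathrm{cl}(C\setminus\{e\})\subseteq\mathrm{cl}(B^*\setminus\{e\})$, and since $e\in\mathrm{cl}(C\setminus\{e\})$ we would obtain $e\in\mathrm{cl}(B^*\setminus\{e\})$, contradicting the independence of the base $B^*$. (Equivalently, this is circuit--cocircuit orthogonality applied to $C$ and the fundamental cocircuit of $e$ with respect to $B^*$.) For such an $f$, the set $B':=(B^*\setminus\{e\})\cup\{f\}$ is independent of size $r(M)$, hence a base, and $B'\subseteq S$ because $f\in C\subseteq S$.

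Finally I would close the loop using the weights. Since $e$ is a minimum--weight element of $C$ and $f\in C$, we have $w(f)\ge w(e)$, so $w(B')=w(B^*)-w(e)+w(f)\ge w(B^*)$. As $B^*$ is a maximum weight base, this inequality must be an equality, so $B'$ is itself a maximum weight base; it avoids $e$ and lies in $S\setminus\{e\}$. Thus invariant (ii) survives the deletion, and the induction goes through. The whole difficulty is concentrated in the exchange claim of the previous paragraph, which is why I single it out as the main obstacle; the remaining rank/closure facts about circuits and the weight comparison are routine.
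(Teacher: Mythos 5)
Your proof is correct, and it takes a genuinely different route from the paper's. The paper argues post hoc about the output: it notes that $B$ is a base (independent because no circuit survives, spanning because elements are only ever removed from circuits), then takes an arbitrary other base $B'$, picks $e \in B \setminus B'$, adds $e$ to $B'$, and deletes the lightest element of the resulting fundamental circuit, claiming this strictly increases the weight of $B'$ on the grounds that ``$e$ is not the lightest element of any circuit''; it concludes that $B$ is the unique maximum weight base. You instead run an invariant along the execution of the procedure --- the current set stays spanning and always contains some maximum weight base --- with the key step being the exchange $B' = (B^* \setminus \{e\}) \cup \{f\}$ for some $f \in C$ with $f \notin \mathrm{cl}(B^* \setminus \{e\})$, justified by the closure/cocircuit argument. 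What your approach buys is rigor and robustness: the paper's assertion that no surviving element is the lightest element of any circuit is essentially the nontrivial content of the lemma (the ``cycle property'') and is stated without proof, and in the presence of ties the paper's strict-increase and uniqueness conclusions are actually false (two parallel elements of equal weight yield two distinct maximum weight bases), whereas your argument only needs $w(f) \ge w(e)$ and never claims uniqueness, so it handles ties correctly. What the paper's approach buys, when the weights happen to be distinct, is brevity and the stronger conclusion that the maximum weight base is unique and equals $B$.
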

\begin{proof}
Since there is no circuit in $B$, it is an independent set. Also, at each round, the remaining elements are generating since we always remove the element from a circuit. 
Take any other base $B'$. Let $e$ be an element that $e \in B \backslash B'$. Adding $e$ to $B'$ creates a circuit, and deleting any element of that circuit would create another base. Let's add $e$ and delete the lightest element of that circuit. That lightest element is definitely not $e$, because $e$ is not the lightest element of any circuit. Thus, we added $e$ to $B'$, and then deleted an element lighter than $e$. This means that we have increased the total weight of $B'$. Therefore, $B'$ is not a maximum weighted base. Since some maximum weighted base must exist, it can only be $B$.
\end{proof}
An immediate corollary of Lemma \ref{lemma:MWB} is that the maximum weighted base of a matroid, after removing the minimum weight element of a circuit in the matroid, remains unchanged. We are now ready to prove Theorem \ref{thm:vertex}. 

\begin{proof}
Fix a weight function $w$ on $[n]$. Let $P \subseteq \mathbb{R}^{2n-1}$ be the affine space of all the points $v$ such that $v_{{i}}=w_i$. One can easily observe that $P$ is convex. Let $S := P \cap \Gamma$. Since $\Gamma_n$ is a convex cone and $P$ is convex, $S$ is a convex polyhedron. Also, it is trivial that the submodulartiy implies that for any point $v \in \Gamma_n$, $v$ is contained in the hypercube whose $A$-th coordinate is between $0$ and $\sum\limits_{i \leq n}^{} w_i$. In other words, for any point in the submodularity cone $\Gamma_n$, the first $n$-coordinates impose an upper bound on any other coordinate. Thus, $S$ is contained in a box with the diagonal $(0)_{A \subseteq [n]}$ and $(\sum\limits_{i \in A} w(i))_{A \subseteq [n]}$. Therefore, $S$ is indeed a polytope. Therefore, in order to prove that a point is a vertex of $S$, it is enough to show that it is an extreme point.


Suppose there is a matroid $\mathcal{M}$ such that $\phi_{\mathcal{M}, w}$ is not a vertex of the polytope $S$ or equivalently, there are two functions $f_w, q_w$ in $S$ such that for every subset $X \subseteq [n]$:
$$\alpha f_w(X) + (1 - \alpha) g_w(X) = \phi_{\mathcal{M},w}(X)$$
With induction on size of $X$ we prove that $$ f_w(X) = g_w(X) = \phi_{\mathcal{M}, w}(X)$$
Since $f_w$ and $g_w$ are in $P$,
for each $i \in [n]$ we have $$ f_w(\{i\}) = g_w(\{i\}) = \phi(\{i\}) $$
Now, assuming the claim for every subset $X$ of size of at most $k$, if $X \subseteq [n]$ and $|X| \leq k$ we have:
$$f_w(X) = g_w(X) = \phi_{\mathcal{M}, w}(X)$$
Let $Y$ be of size $k+1$, and  $j$ be the smallest weight in $Y$. We know that:
$$ f_w(Y\backslash j) = g_w(Y\backslash j) = \phi_{\mathcal{M}, w} (Y \backslash j)$$
Consider the following cases:
\begin{itemize}
\item[1)]
$r(Y \backslash j) < r(Y)$. In this case, $$\phi_w(Y) = \phi_w(Y \backslash j) + \phi_w(\{j\}) = \phi_w(Y \backslash j) + w_j.$$
On the other hand, from the submodularity conditions we have:
$$ f_w(Y) \leq f_w(Y \backslash j) + f_w(j) = \phi_w (Y \backslash j) + w_j $$
$$ g_w(Y) \leq g_w (Y \backslash j) + g_w(j) = \phi_w (Y \backslash j) + w_j
$$
Since $\phi$ is a convex combination of $f_w$ and $g_w$, therefore $f_w(Y)$ and $g_w(Y)$ must be equal to $\phi_w(Y)$.

\item[2)] $r(Y \backslash j)=r(Y)$. Since $j$ has the smallest weight in $Y$ and $j$ belongs to some circuit in $Y$, by Lemma \ref{lemma:MWB} we have $\phi_{\mathcal{M}, w}(Y) = \phi_{\mathcal{M}, w}(Y \backslash j)$.\\

Also, from monotonicity of $f_w$ and $g_w$ we have:
$$ f_w(Y) \geq f_w(Y \backslash j) = \phi_w(Y \backslash j) = \phi_w(Y) $$
and similarly $g_w(Y) \geq \phi_w(Y)$. Thus, both $f_w(Y)$ and $g_w(Y)$ must be equal to $\phi_w(Y)$ and this proves the claim.
\end{itemize}
\end{proof}
Let $\mathcal{M = (M , \mathcal{I})}$ be a binary matroid with representative matrix $$ A = 
\begin{bmatrix}
    v_1 & v_2 & . & . & .  & v_n \\
\end{bmatrix}
$$ where $|M| = n$ and $v_i$ is a $m \times 1$ vector.
\\ Define $w_{\text{max}} = \text{max} \{ w_e \text{ }|\text{ }  e \in M \}$.
We define the matrix $X$ of variables as:
$$ X = 
\begin{bmatrix}
    X_1^1 & X_1^2 & . & . & .  & X_1^m \\
    . \\
    .\\
    X_{w_{\text{max}}}^1 & X_{w_{\text{max}}}^2 & . & . & .  & X_{w_{\text{max}}}^m \\
\end{bmatrix}
$$

$Y_e$ is the variable corresponding to the element $e$ of weight $w_e$. It is the join of the variables of the first $w_e$ variables in the matrix $X \times v_e$ such that each $X_i^k$ is an independent $Ber(\frac{1}{2})$. If we define $X_i$ to be the $i$-th row of the matrix $X$, then: 
\begin{equation}
\label{equ:Y_e}
    Y_e = (X_1 v_e, X_2 v_e, ..., X_{w_e} v_e)
\end{equation}
\begin{lemma} 
\label{lemma:removeelement}
Let $\mathcal{M} = (M, \mathcal{I})$ be a matroid over $\mathbb{F}_2$ with the random variable $Y_e$ assigned to each $e \in M$ as in Equation \ref{equ:Y_e} and $e'$ be the lightest element in a circuit $C$ in $\mathcal{M}$. If $A\subseteq M$ containing $C$ then we have:
$$
H(\bigcup\limits_{e\in A}Y_e)=\bigcup\limits_{e\in A, e\neq e'}Y_e
$$ 
\end{lemma}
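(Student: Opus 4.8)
The plan is to show that, under the stated hypotheses, the variable $Y_{e'}$ is a deterministic function of the remaining collection $\{Y_e : e \in A,\ e \neq e'\}$. Once this is established, appending $Y_{e'}$ to that collection cannot increase the joint entropy, which yields the claimed equality directly.

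First I would exploit that $\mathcal{M}$ is binary and that $C$ is a circuit. Over $\mathbb{F}_2$ the unique minimal linear dependence among the columns indexed by a circuit has every coefficient equal to $1$, so $\sum_{e \in C} v_e = 0$, equivalently $v_{e'} = \sum_{e \in C \setminus \{e'\}} v_e$, where the sum is taken over $\mathbb{F}_2$. Left-multiplying by the $i$-th row $X_i$ of the variable matrix and using linearity of the inner product over $\mathbb{F}_2$ gives, for every index $i$,
$$X_i v_{e'} = \bigoplus_{e \in C \setminus \{e'\}} X_i v_e.$$
Thus the $i$-th bit of $Y_{e'}$ is the XOR of the $i$-th bits of the variables $\{Y_e : e \in C \setminus \{e'\}\}$.

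The crucial step, and the place where the hypothesis that $e'$ is the lightest element of $C$ enters, is to verify that these bits actually occur among the variables on the right-hand side. Since $Y_e$ records only the first $w_e$ coordinates of $X v_e$, the bit $X_i v_e$ is a genuine coordinate of $Y_e$ exactly when $i \leq w_e$. Because $e'$ is lightest in $C$, we have $w_e \geq w_{e'}$ for every $e \in C$; hence for each $i \in \{1, \ldots, w_{e'}\}$ and each $e \in C \setminus \{e'\}$ the bit $X_i v_e$ is indeed a coordinate of $Y_e$. Consequently every coordinate of $Y_{e'}$ is determined by $\{Y_e : e \in C \setminus \{e'\}\}$, and since $C \subseteq A$, by $\{Y_e : e \in A,\ e \neq e'\}$.

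Finally I would close with the standard entropy argument: if $Y_{e'}$ is a deterministic function of a set of variables, then its conditional entropy given those variables vanishes, so
$$H\Big(\bigcup_{e \in A} Y_e\Big) = H\Big(\bigcup_{e \in A,\, e \neq e'} Y_e\Big) + H\Big(Y_{e'} \,\Big|\, \bigcup_{e \in A,\, e \neq e'} Y_e\Big) = H\Big(\bigcup_{e \in A,\, e \neq e'} Y_e\Big).$$
The main obstacle here is conceptual rather than computational: one must ensure that the linear dependency among the columns descends to a dependency among the \emph{truncated} variables $Y_e$, and this is precisely what the lightest-element hypothesis secures, since it guarantees that no coordinate required on the right-hand side has been cut off by truncation.
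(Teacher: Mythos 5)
Your proof is correct and follows essentially the same route as the paper's: both use the $\mathbb{F}_2$ circuit dependence $\sum_{e\in C} v_e = 0$, multiply by each row $X_i$ to obtain coordinate-wise dependencies, invoke the lightest-element hypothesis to guarantee that all $w_{e'}$ required coordinates survive truncation in the other variables of $C$, and conclude via vanishing conditional entropy. Your write-up is, if anything, slightly more explicit than the paper's about why the minimality of $w_{e'}$ is needed and about the final chain-rule step.
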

\begin{proof}
Define $Y_i^{t}$ to be:
$$ Y_i^{t} = \left\{
\begin{array}{ll}
      (X_1 v_i, X_2 v_i..., X_{t} v_i) & \text{if } t \leq w_i \\
      Y_i & \text{otherwise} \\
\end{array} \right. $$
Let $C= \{ v_{i_1} , v_{i_2} , ... , v_{i_k} \}$ be a circuit where the weight of $v_{i_j}$ is $w_{i_j}$ and suppose that $w_{i_1} \geq w_{i_2} \geq ... \geq w_{i_k}$.\\
Since $C$ is a circuit
$ \sum\limits_{j=1}^{k-1}  v_{i_j} = v_{i_k} $ and we have that $$\forall l \leq w_{i_k} \text{ : } \sum\limits_{j = 1}^{k-1}  X_l v_{i_j} = X_l v_{i_k}$$.
\\
Since for every $j$ we have that $w_{i_j} \geq w_{i_k}$:
$$ \sum\limits_{j=1}^{k-1}  Y_{i_j}^{w_{i_k}} = Y_{i_k}^{w_{i_k}} = Y_{i_k} $$
$$ \Longrightarrow H(Y_{i_k} | Y_{i_1}^{w_{i_k}}, Y_{i_2}^{w_{i_k}}, ..., Y_{i_{k-1}}^{w_{i_k}}) = 0 $$
$$ \Longrightarrow H(Y_{i_k} | Y_{i_1}, Y_{i_2}, ..., Y_{i_{k-1}}) = 0$$
$$\Longrightarrow H(Y_{i_k} | \bigcup\limits_{j=1, j \neq i_k}^{n} Y_{j}) = 0$$
Therefore, if we remove $v_{i_k}$ from the matroid, its entropy remains the same.
\end{proof}
To prove Theorem \ref{thm:main}, we need one more technical lemma about the statistical independence of linearly independent combinations of statistically independent Bernoulli random variables with parameter $\frac{1}{2}$. The following proposition, proved in \cite{nyberg2017statistical}, (Page 2, Theorem 1) is particularly useful.

\begin{proposition}
\label{prop:independence}
Linearly independent combinations of statistically independent Bernoulli random variables are statistically independent.
\end{proposition}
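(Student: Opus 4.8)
The plan is to reduce Proposition~\ref{prop:independence} to the elementary fact that an invertible linear map over $\mathbb{F}_2$ permutes the points of $\mathbb{F}_2^n$ and therefore preserves the uniform distribution. Let $X_1,\ldots,X_n$ be independent $Ber(\tfrac{1}{2})$ variables, so that the vector $X=(X_1,\ldots,X_n)$ is uniformly distributed on $\mathbb{F}_2^n$. I would write the given combinations as $Y_j=\sum_{i=1}^n a_{ji}X_i$ with coefficient vectors $a_1,\ldots,a_m\in\mathbb{F}_2^n$, and the hypothesis is that these vectors are linearly independent.

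First I would extend $\{a_1,\ldots,a_m\}$ to a basis $\{a_1,\ldots,a_n\}$ of $\mathbb{F}_2^n$; this is possible precisely because the $a_j$ are independent. The matrix $A$ whose rows are $a_1,\ldots,a_n$ is then invertible over $\mathbb{F}_2$, so $x\mapsto Ax$ is a bijection of $\mathbb{F}_2^n$. Since a bijection carries the uniform distribution to the uniform distribution, the vector $AX=(Y_1,\ldots,Y_n)$ is uniform on $\mathbb{F}_2^n$. A uniform vector on $\mathbb{F}_2^n$ has independent coordinates, each $Ber(\tfrac{1}{2})$, and any sub-collection of an independent family is independent; restricting to the first $m$ coordinates yields that $Y_1,\ldots,Y_m$ are statistically independent, as claimed.

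A cleaner way to certify the same conclusion, which I would likely include to make the argument self-contained, is through characters. For every nonempty $S\subseteq\{1,\ldots,m\}$ one has $\mathbb{E}\bigl[(-1)^{\sum_{j\in S}Y_j}\bigr]=\mathbb{E}\bigl[(-1)^{b\cdot X}\bigr]$ where $b=\sum_{j\in S}a_j$. Linear independence forces $b\neq 0$, and for any nonzero $b$ the independence and symmetry of the $X_i$ give $\mathbb{E}\bigl[(-1)^{b\cdot X}\bigr]=\prod_{i:\,b_i=1}\mathbb{E}\bigl[(-1)^{X_i}\bigr]=0$. The vanishing of all nontrivial characters is equivalent to $(Y_1,\ldots,Y_m)$ being uniform on $\mathbb{F}_2^m$, i.e.\ to the independence of the $Y_j$.

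The only real obstacle is bookkeeping the scalar field: every step above uses that the combinations are taken modulo $2$ and that $Ber(\tfrac{1}{2})$ is exactly the uniform law on $\mathbb{F}_2$, which is what matches the binary matroid representation used elsewhere in the paper. Over a general field the same invertible-map argument applies verbatim once $Ber(\tfrac{1}{2})$ is replaced by the uniform distribution on the field. I would also dispatch the trivial edge cases, namely the empty index set $S$ and the impossibility $m>n$ (which is ruled out by linear independence), so that the character computation is airtight.
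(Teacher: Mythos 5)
Your proof is correct, but it differs from the paper in the most literal way possible: the paper does not prove Proposition~\ref{prop:independence} at all --- it outsources it, citing Theorem~1 of \cite{nyberg2017statistical}. Your change-of-basis argument (extend $a_1,\ldots,a_m$ to a basis of $\mathbb{F}_2^n$, note the resulting matrix is invertible, that a bijection carries the uniform law on $\mathbb{F}_2^n$ to itself, and that uniformity on $\mathbb{F}_2^n$ is exactly independence of $Ber(\tfrac{1}{2})$ coordinates) is complete and self-contained, and your character computation is an equally valid second certificate, since vanishing of $\mathbb{E}\bigl[(-1)^{c\cdot Y}\bigr]$ for all nonzero $c\in\mathbb{F}_2^m$ is equivalent to uniformity of $(Y_1,\ldots,Y_m)$. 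What your route buys is transparency and self-containedness where the paper settles for brevity. One point you gesture at in your final paragraph deserves to be made loudly: the proposition as stated in the paper, for ``Bernoulli random variables'' of unspecified parameter, is false in general --- for independent $X_1,X_2\sim Ber(p)$ with $p\notin\{0,\tfrac{1}{2},1\}$, the combinations $X_1$ and $X_1+X_2$ (mod $2$) are correlated, since $P(X_1=1,\,X_1+X_2=1)=pq$ while the product of marginals is $2p^2q$. So your restriction to $Ber(\tfrac{1}{2})$, i.e.\ to the uniform law on $\mathbb{F}_2$, is not mere bookkeeping but precisely the hypothesis under which the statement, and its use in Lemma~\ref{lemma:summation} and Theorem~\ref{thm:main}, is true; this matches the setting of the cited reference, where the variables are assumed balanced.
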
 

\begin{lemma}
\label{lemma:summation} Let $\mathcal{M} = (M, \mathcal{I})$ be a representable matroid over $\mathbb{F}_2$ with the random variable $Y_e$ assigned to each $e \in M$ as in Equation \ref{equ:Y_e}. The entropy of an independent set $I \in \mathcal{I}$ is the summation of entropies of all the elements of $I$.
\end{lemma}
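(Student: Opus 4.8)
The plan is to show that the family $\{Y_e : e\in I\}$ is mutually statistically independent; since the entropy of the joint of independent variables is the sum of their individual entropies, this immediately gives $H(\bigcup_{e\in I} Y_e)=\sum_{e\in I} H(Y_e)$. To invoke Proposition \ref{prop:independence}, I first unpack each $Y_e$ into its scalar coordinates. By Equation \ref{equ:Y_e}, the joint vector $\bigcup_{e\in I} Y_e$ is precisely the collection of scalars $\{X_l v_e : e\in I,\ 1\le l\le w_e\}$, and each such scalar $X_l v_e=\sum_{k=1}^m X_l^k (v_e)_k$ is an $\mathbb{F}_2$-linear combination of the underlying independent $Ber(\tfrac12)$ variables $\{X_l^k\}$. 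Hence it suffices to prove that these linear forms are linearly independent over $\mathbb{F}_2$.

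First I would exploit the block structure of the matrix $X$: the form $X_l v_e$ uses only the entries $X_l^1,\dots,X_l^m$ of row $l$, so forms coming from distinct row indices $l$ have disjoint supports and cannot interfere with one another. Consequently the linear independence of the entire collection reduces, for each fixed row index $l$, to the linear independence of the forms $\{X_l v_e : e\in I,\ w_e\ge l\}$. For a fixed $l$, such a form is determined by the coefficient vector $v_e$, and these forms are linearly independent exactly when the vectors $\{v_e : e\in I,\ w_e\ge l\}$ are linearly independent over $\mathbb{F}_2$.

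Next I would use the hypothesis $I\in\mathcal{I}$: since $\mathcal{M}$ is representable over $\mathbb{F}_2$ by the columns $v_e$, independence of $I$ means precisely that $\{v_e:e\in I\}$ is a linearly independent set of vectors, so every subset $\{v_e : e\in I,\ w_e\ge l\}$ is also linearly independent. Combining this over all rows $l$ yields linear independence of the whole collection of forms, and Proposition \ref{prop:independence} then makes them mutually statistically independent. Moreover, each $v_e$ with $e\in I$ is nonzero, so each $X_l v_e$ is a nontrivial combination of fair bits, hence itself $Ber(\tfrac12)$ with entropy $1$; thus $H(Y_e)=w_e$ and $H(\bigcup_{e\in I}Y_e)$ equals the number of forms, namely $\sum_{e\in I} w_e=\sum_{e\in I}H(Y_e)$, as claimed.

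I expect the only delicate point to be the reduction across rows: one must verify that the disjoint-support argument legitimately decomposes the global independence question into the per-row questions, and that matroid independence of $I$ transfers to every sub-collection of columns selected by the threshold $w_e\ge l$. Once Proposition \ref{prop:independence} is available, the remaining steps are routine.
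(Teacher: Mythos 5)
Your proof is correct and takes essentially the same route as the paper's: both reduce the lemma to mutual statistical independence of the variables assigned to $I$ and obtain that independence from Proposition \ref{prop:independence} applied to linearly independent combinations of the underlying $Ber(\tfrac12)$ bits. In fact your write-up is more careful than the paper's own proof, which simply asserts that independence is ``a direct implication'' of the proposition, whereas you explicitly carry out the row-by-row disjoint-support decomposition and the identification of each form $X_l v_e$ with the column vector $v_e$, which is exactly the detail needed to make that invocation rigorous.
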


\begin{proof}
First of all, notice that the summation of two independent Bernoulli random variables with parameter $\frac{1}{2}$ is again a $Ber(\frac{1}{2})$ random variables. Thus, to complete the proof, all we need to justify is that the random variables assigned to the elements are independent. 

In fact, assuming the independency condition, the entropy function of the random variables becomes the summation of the entropies of the individual ones. From this, the entropy of each element is equal to the summation of the entropies of the same number of Bernoulli random variables with parameter $\frac{1}{2}$ as its weight. Also, the entropy of the random variables assigned to the elements in $I$ can be expanded as the summation of the entropies of the variables in each element, again under the independence condition.

Therefore, we just have to prove that the set of all the random variables assigned to the elements of an independent set $I$ are statistically independent. This is also a direct implication of proposition \ref{prop:independence}.

\end{proof}

\begin{figure}
    \centering

\tikzset{every picture/.style={line width=0.75pt}} 

\begin{tikzpicture}[x=0.5pt,y=0.5pt,yscale=-1,xscale=1]

\fill [black] (283.09,188.4) circle (2);
\fill [black] (429.8,319.4) circle (2);
\fill [black] (141,319.4) circle (2);
\fill [black] (497,190.05) circle (2);
\fill [black] (497.5,320.55) circle (2);

\draw   (283.09,188.4) -- (429.8,319.4) -- (141,319.4) -- cycle ;
\draw    (497,190.05) -- (283.09,188.4) ;
\draw    (497.5,320.55) -- (497,190.05) ;

\draw (286,170) node [anchor=north west][inner sep=0.75pt]   [align=left] {$v_3$};
\draw (121,298) node [anchor=north west][inner sep=0.75pt]   [align=left] {$v_5$};
\draw (434,303) node [anchor=north west][inner sep=0.75pt]   [align=left] {$v_4$};
\draw (105,202) node [anchor=north west][inner sep=0.75pt]   [align=left] {$ 1, (X_1^{v_3} + X_1^{v_5}) $};
\draw (170,340) node [anchor=north west][inner sep=0.75pt]   [align=left] {$ 2, (X_1^{v_4} + X_1^{v_5}, X_2^{v_4} + X_2^{v_5}) $};
\draw (338,215) node [anchor=north west][inner sep=0.75pt]   [align=left] {$ 1, (X_1^{v_3} + X_1^{v_4}) $};
\draw (484.5,322) node [anchor=north west][inner sep=0.75pt]   [align=left] {$v_1$};
\draw (497,173) node [anchor=north west][inner sep=0.75pt]   [align=left] {$v_2$};
\draw (330.5,160.5) node [anchor=north west][inner sep=0.75pt]   [align=left] {$ 1, (X_1^{v_2} + X_1^{v_3}) $};
\draw (501,249.5) node [anchor=north west][inner sep=0.75pt]   [align=left] {$ 1, (X_1^{v_1} + X_1^{v_2}) $};

\end{tikzpicture}
    \caption{The construction of variables for graphic matroids case; each edge is shown by its weight and the variable corresponding to it.}
    \label{fig:variablesfigure}
\end{figure}
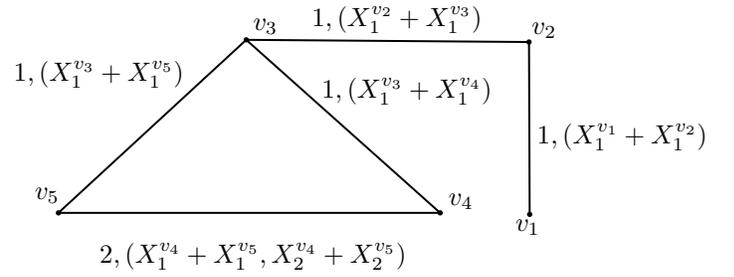

\begin{proof}[Proof (Theorem \ref{thm:main} graphic matroids)]
Let $G = (V , E)$ be a graphic matroid.
We define variable $Y_e$ corresponding to edge $e=(v , u) \in E$ of weight $w_e$ to be $$ Y_e = (X_1^v + X_1^u , X_2^v + X_2^u , ... , X_{w_e}^v + X_{w_e}^u)$$ That each $X_i^k$ is an independent $Ber(\frac{1}{2})$. All forests in a graphic matroid are independent sets. Therefore, a base in a graphic matroid is a spanning forest.
Given a subset $G' = (V ,  E' \subseteq E)$, by Lemma \ref{lemma:removeelement} and Lemma \ref{lemma:MWB} the entropy of the variables defined on $E'$ is equal to the entropy of the variables defined on the edges of the maximum weighted forest $F$. By Lemma \ref{lemma:summation} we have that:
$$ H(\bigcup_{e \in F} Y_e) = \sum\limits_{e \in F} H(Y_e) $$

For each i, $X_{i}^{k}$ is independent from others. So $X_{i}^{v} + X_{i}^{u}$ is also independent from others. By definition of entropies for each $e = (u, v) \in E$ we have that:
$$H(X_i^v + X_i^u) = 1$$
$$H(Y_e) = H(X_1^v + X_1^u , X_2^v + X_2^u , ... , X_{w_e}^v + X_{w_e}^u) = w_e$$
$$ \Longrightarrow H(\bigcup_{e \in E'} Y_e) = H(\bigcup_{e \in F} Y_e) = \sum\limits_{e \in F} w_e = \phi_w(E')$$

\end{proof}
The special case above is intended to give intuition about the proof for the general case. Below, we prove the theorem in the general setting.
\begin{proof}[Proof (Theorem \ref{thm:main} general case)]
Given a subset $I \subseteq M$, by Lemma \ref{lemma:removeelement} and Lemma \ref{lemma:MWB} the entropy of the variables defined on the elements of $I$ is equal to the entropy of the variables defined on the elements of the maximum weighted independent set $B \subseteq I$.\\
Since $X_1, X_2, ..., X_w$ are independent, multiplying them by a non zero vector $v_e$ does not change the independence. Therefore, $X_1 v_e, X_2 v_e, ..., X_{w_e} v_e$ are independent. $X_j v_e$ for each $j$ is the summation of some independent $Ber(\frac{1}{2})$s. Therefore, $H(X_j v_e) = 1$.

\begin{equation}
    \label{equ:1}
    H(Y_e) = H(\bigcup\limits_{j=1}^{w_e} X_j v_e) = \sum\limits_{j = 1}^{w_e} H(X_j v_e) = w_e
\end{equation}

By Equation \ref{equ:1} and Lemma \ref{lemma:summation} we have that:
$$ H(\bigcup_{e \in I} Y_e) = H(\bigcup_{e \in B} Y_e) =  \sum\limits_{e \in B} H(Y_e) =\sum\limits_{e \in B} w_e = \phi_w(I)$$
\end{proof}

\begin{proof}[Proof (Theorem \ref{thm:logk})]
Suppose $\phi_{w, \mathcal{M}}$ is entropic where $w = w_0$ is a constant weight function.
Take $C = \{ e_1, e_2, ..., e_m \}$ to be a circuit in $\mathcal{M}$. Since $\phi_{w, \mathcal{M}}$ is entropic there exist variables $X_1, ..., X_m$ such that:
$$ H(\bigcup\limits_{i \in I} X_{i}) = \phi_{w}(I) \text{ for } I \subseteq C $$

Let $\mathcal{X}_i$ be the support of $X_i$. Since $w = w_0$, 
$$ H(\bigcup\limits_{i \in I} X_{i}) = \left\{
\begin{array}{ll}
      (m-1) w_0 & \text{if } |I| = m \\
      |I| w_0 & \text{if } |I| < m \\
\end{array} \right. $$

$$ \Longrightarrow H(\bigcup\limits_{i \in {C \backslash {e_j}}} X_{i}) = (m-1) w_0 = H(\bigcup\limits_{i \in {C}} X_{i})$$ $$ \Longrightarrow H(X_j | \bigcup\limits_{i \in {C \backslash {e_j}}} X_{i}) = 0$$
Therefore, if we the values of $X_1, ..., X_{m-1}$ are known,  the value of $X_m$ is uniquely determined. Thus, if $X_i=\alpha_i$   for arbitrary values of $\alpha_i \in \chi_i$ for $1 \leq i\leq m-1$, then:
$$ p(X_m = \beta | X_1 = \alpha_1 , X_2 = \alpha_2, ..., X_{m-1} = \alpha_{m-1}) \in \{0,1\}$$
Since $C$ is a circuit, each set $S = \{ X_i | i \in L \subseteq [m] \text{ and } |L| \leq m-1 \}$ is independent. Therefore, 
$$ \alpha_i \in \mathcal{X}_i \Longrightarrow p(X_i = \alpha_i) > 0 \Longrightarrow$$
$$ p(X_1 = \alpha_1 , X_2 = \alpha_2 , ..., X_{m-1} = \alpha_{m-1}) =$$ $$ p(X_1 = \alpha_1) p(X_2 = \alpha_2) ... p(X_{m-1} = \alpha_{m-1}) > 0$$
And because $H(X_m | X_1, X_2,...,X_{m-1}) = 0$ there exists $\alpha_m \in \mathcal{X}_m$ such that 
$$p(X_m = \alpha_m | X_1 = \alpha_1, ..., X_{m-1} = \alpha_{m-1})=1$$
$$ \Longrightarrow p(X_1 = \alpha_1,...,X_m = \alpha_m) =$$
$$ p(X_m = \alpha_m | \bigcup\limits_{j=1}^{m-1}
(X_j = \alpha_j)) \times p( \bigcup\limits_{j=1}^{m-1}
(X_j = \alpha_j))$$
$$ = p(X_1 = \alpha_1) p(X_2 = \alpha_2) ... p(X_{m-1} = \alpha_{m-1})$$
Also, for each $i$:
$$ p(X_i = \alpha_i | X_j = \alpha_j \text{ for } 1 \leq j \leq m \text{ and } j \neq i )=$$
$$\frac{p(X_1 = \alpha_1 , ..., X_m = \alpha_m)}{
\prod\limits_{j=1 \text{ and } j \neq i}^{m}
p(X_j = \alpha_j) 
} > 0 $$
$$ \Longrightarrow  p(X_i = \alpha_i | X_j = \alpha_j \text{ for } 1 \leq j \leq m \text{ and } j \neq i ) = 1 $$
Thus, we also have:
$$ p(X_1 = \alpha_1 , ..., X_m = \alpha_m) = \prod\limits_{j=1 \text{ and } j \neq i}^{m}
p(X_j = \alpha_j)  $$
$$ \Longrightarrow \forall i \text{ : } p(X_i = \alpha_i) = p(X_m = \alpha_m) $$
$$ \Longrightarrow  \forall i,j < m \text{ : } p(X_i = \alpha_i) = p(X_j = \alpha_j) $$
Since we chose $\alpha_i$ and $ \alpha_j$ arbitrarily, we proved that:
$$ \forall \alpha_{i_1}, \alpha_{i_2} \in \mathcal{X}_i \text{ and } \forall \alpha_j \in \mathcal{X}_j$$
$$ \Longrightarrow p(X_i = \alpha_{i_1}) = p(X_i = \alpha_{i_2}) = p(X_j = \alpha_{j}) $$
$$ \Longrightarrow p(X_i = \alpha_i) = \frac{1}{|\mathcal{X}_i|} = p(X_j = \alpha_j)$$
Therefore, $\forall i$ : $X_i$ is uniform and $|\mathcal{X}_1| = |\mathcal{X}_2| = ... = |\mathcal{X}_m|$.
Then $H(X_i) = \log |\mathcal{X}_i|$ for all $i$. Therefore, $w = \log |\mathcal{X}_1|$.
\\
\\
\end{proof}
\begin{proof}[Proof (Theorem \ref{thm:inverse})]
Let $G = (V , E)$ be a graph of weight function $w$ on $E$. Take a random orientation on the edges that gives an incidence matrix $A$. For each directed edge $e = (u, v)$, define the variable corresponding to $e$ to be $Y_e = X_v - X_u$ where for $\forall i \in V$ each $X_i$ is a uniform random variable  on $[k]$. Then $Y_e$ is a uniform random variable on $[k]$ as well. 

Take $$ A = 
\begin{bmatrix}
    v_1 & v_2 & . & . & .  & v_m \\
\end{bmatrix}$$ where each column corresponds to an edge of $E$.
Let $C = (v_{i_1}, v_{i_2}, ..., v_{i_l})$ be a cycle in $G$. Then there exists $\alpha_j \in \{-1,1\}, \forall 1 \leq j \leq l$ that $ \sum\limits_{j = 1}^{l} \alpha_j v_{i_j} = 0 $. Take an arbitrary edge $e$ then
$ v_e = \frac{-1}{\alpha_e} \sum\limits_{j = 1, j \neq e}^{l} \alpha_j v_{i_j}$.
Let $X$ be the vector of variables
$X = \begin{bmatrix}
    X_1 & X_2 & . & . & .  & X_n \\
\end{bmatrix}$.
$$ \Longrightarrow X v_e = X \frac{-1}{\alpha_e} \sum\limits_{j = 1, j \neq e}^{l} \alpha_j v_{i_j}$$
$$ \Longrightarrow Y_e = \frac{-1}{\alpha_e}  \sum\limits_{j = 1, j \neq e}^{l} \alpha_j Y_j  \Longrightarrow H(Y_e | \bigcup\limits_{j \neq e, j = 1}^{l} Y_j) = 0$$
Therefore, after removing an element from each cycle of the graph, the entropy of the variables defined above remains the same. So the entropy of the graph is the entropy of its spanning tree $T$.
Now take an arbitrary root $u$ of $T$ and its neighbour vertex by the edge $e^*$ is $v$. So $Y_{e^*} = X_u - X_v$ or $Y_{e^*} = X_v - X_u$.
Without loss of generality assume that $Y_{e^*} = X_u - X_v$.  Since $u$ is a root of $T$, it does not appear on the variables of other edges of $T$. So :
$$ H(X_u - X_v | \bigcup\limits_{e \in T , e \neq e^*} Y_e) \geq H(X_u - X_v | \bigcup\limits_{i \neq u} X_i) $$
$$ = H(X_u - X_v | X_v).$$
As we know, if $X_u$ is a uniform random variable on $[k]$, then $X_u - \alpha$ is also a uniform random variable on $[k]$. So, we have $$H(X_u - X_v | X_v) = H(X_u - X_v) 
$$ $$\Longrightarrow H(Y_{e^*} | \bigcup\limits_{e \in T, e \neq e^*} Y_e) = H(X_u - X_v) = H(Y_{e^*})$$
$$ \Longrightarrow H(\bigcup\limits_{e \in T} Y_e) = H(Y_{e^*}) + H(\bigcup\limits_{e \in T, e \neq e^*} Y_e)$$
The above arguments are true if $Y_{e^*} = X_v - X_u$.
By repeating this, we have 
$$ H(\bigcup\limits_{e \in T} Y_e) = \sum\limits_{e \in T} H(Y_e) = |T| \log k = \phi_w(T) = \phi_w(E)$$
\end{proof}

\section*{Discussion}

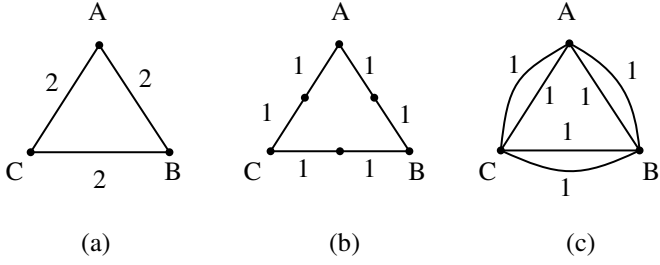
\begin{figure}
    \centering

\tikzset{every picture/.style={line width=0.75pt}} 

\begin{tikzpicture}[x=0.75pt,y=0.75pt,yscale=-1,xscale=1]
\fill [black] (233.6,326.05) circle (2);
\fill [black] (269,380.1) circle (2);
\fill [black] (199,380.1) circle (2);
\fill [black] (78,380.6) circle (2);
\fill [black] (112.6,326.55) circle (2);
\fill [black] (148,380.6) circle (2);
\fill [black] (251.3,353.075) circle (2);
\fill [black] (234,380.1) circle (2);
\fill [black] (216.3,353.075) circle (2);
\fill [black] (349.7, 325.65) circle (2);
\fill [black] (385.1,379.7) circle (2);
\fill [black] (315.1,379.7) circle (2);
\draw   (233.6,326.05) -- (269,380.1) -- (199,380.1) -- cycle ;
\draw   (112.6,326.55) -- (148,380.6) -- (78,380.6) -- cycle ;
\draw   (349.7,325.65) -- (385.1,379.7) -- (315.1,379.7) -- cycle ;
\draw    (349.7,325.65) .. controls (377.2,340.15) and (382.7,355.65) .. (385.1,379.7) ;
\draw    (349.7,325.65) .. controls (322.7,341.15) and (318.7,345.15) .. (315.1,379.7) ;
\draw    (315.1,379.7) .. controls (344.7,393.15) and (354.7,394.15) .. (385.1,379.7) ;

\draw (64,384) node [anchor=north west][inner sep=0.75pt]   [align=left] {C};
\draw (267,384) node [anchor=north west][inner sep=0.75pt]   [align=left] {B};
\draw (385,384) node [anchor=north west][inner sep=0.75pt]   [align=left] {B};
\draw (302.6,384) node [anchor=north west][inner sep=0.75pt]   [align=left] {C};
\draw (184,384) node [anchor=north west][inner sep=0.75pt]   [align=left] {C};
\draw (144,384) node [anchor=north west][inner sep=0.75pt]   [align=left] {B};
\draw (228,303.3) node [anchor=north west][inner sep=0.75pt]   [align=left] {A};
\draw (105.5,303.3) node [anchor=north west][inner sep=0.75pt]   [align=left] {A};
\draw (84,339.3) node [anchor=north west][inner sep=0.75pt]   [align=left] {2};
\draw (317,330) node [anchor=north west][inner sep=0.75pt]   [align=left] {1};
\draw (108,388) node [anchor=north west][inner sep=0.75pt]   [align=left] {2};
\draw (131,337.3) node [anchor=north west][inner sep=0.75pt]   [align=left] {2};
\draw (343.1,391.9) node [anchor=north west][inner sep=0.75pt]   [align=left] {1};
\draw (335,345.9) node [anchor=north west][inner sep=0.75pt]   [align=left] {1};
\draw (344,364) node [anchor=north west][inner sep=0.75pt]   [align=left] {1};
\draw (353.1,346.4) node [anchor=north west][inner sep=0.75pt]   [align=left] {1};
\draw (376.6,332.9) node [anchor=north west][inner sep=0.75pt]   [align=left] {1};
\draw (342.1,303.4) node [anchor=north west][inner sep=0.75pt]   [align=left] {A};
\draw (102,420) node [anchor=north west][inner sep=0.75pt]   [align=left] {(a)};
\draw (227,420) node [anchor=north west][inner sep=0.75pt]   [align=left] {(b)};
\draw (346.6,420) node [anchor=north west][inner sep=0.75pt]   [align=left] {(c)};
\draw (218.5,388.52) node   [align=left] {\begin{minipage}[lt]{9.38pt}\setlength\topsep{0pt}
1
\end{minipage}};
\draw (216.1,336.52) node   [align=left] {\begin{minipage}[lt]{9.38pt}\setlength\topsep{0pt}
1
\end{minipage}};
\draw (252,388.52) node   [align=left] {\begin{minipage}[lt]{9.38pt}\setlength\topsep{0pt}
1
\end{minipage}};
\draw (200.1,360.12) node   [align=left] {\begin{minipage}[lt]{9.38pt}\setlength\topsep{0pt}
1
\end{minipage}};
\draw (270,361.32) node   [align=left] {\begin{minipage}[lt]{9.38pt}\setlength\topsep{0pt}
1
\end{minipage}};
\draw (252,336.52) node   [align=left] {\begin{minipage}[lt]{9.38pt}\setlength\topsep{0pt}
1
\end{minipage}};

\end{tikzpicture}
    \caption{(a): the weights of all edges are $2$ and the entropy of the graph is $4$.
    (b): serial multiplication; the entropy of the graph is $5$.
    (c): parallel multiplication; the entropy of the graph is $2$.}
    \label{fig:fig2}
\end{figure}

When the weight function is constantly equal to $w$ on the singletons, the weighted rank function is identical to the usual rank function multiplied by the constant $w$. This function is shown to be entropic when the underlying matroid is representable over a finite field $\mathbb{F}_q$ and $w=\log (q)$. Similarly, if we evaluate the entropy with the logarithm function in base $q$, then Theorem \ref{thm:main} holds for any representable matroid over the field $\mathbb{F}_q$. The proof is identical to the binary case. For the sake of simplicity, we only consider the binary case in the statement of the theorem.

Another important point is that when we deal with integral weight functions on mathematical objects, it is often the case that one can replace the integral weights by modifying the objects and making all the weights equal to one. A typical example is the case of integral weighs on the edges of a graph. If we replace each edge with a path of length equal to the weight of the edge, then many properties in the graph remains unchanged. For example, this procedure does not change the shortest path between any pair of the original vertices. For our problem, one may naturally expect that when the weight of an element is multiplied by an integer, the same argument as \cite{abbe2019entropic} on a new matroid whose corresponding element is replaced by a sequence of parallel elements will show that still the weighted rank function is entropic. However, in Figure \ref{fig:fig2}, we present an example that shows that this idea does not work.  The idea of replacing an element of the matroid with a series of elements does not work either as shown in Figure \ref{fig:fig2}. This shows that our result indeed uses different ideas as the one in the paper \cite{abbe2019entropic}.

\nocite{*}
\bibliographystyle{plain} 
\bibliography{ref} 

\begin{thebibliography}{1}

\bibitem{abbe2019entropic}
Emmanuel Abbe and Sophie Spirkl.
\newblock Entropic matroids and their representation.
\newblock {\em Entropy}, 21(10):948, 2019.

\bibitem{brondsted2012introduction}
Arne Brondsted.
\newblock {\em An introduction to convex polytopes}, volume~90.
\newblock Springer Science \& Business Media, 2012.

\bibitem{cover1991information}
Thomas~M Cover and Joy~A Thomas.
\newblock Information theory and statistics.
\newblock {\em Elements of information theory}, 1(1):279--335, 1991.

\bibitem{nyberg2017statistical}
Kaisa Nyberg.
\newblock Statistical and linear independence of binary random variables.
\newblock {\em Cryptology ePrint Archive}, 2017.

\bibitem{oxley2006matroid}
James~G Oxley.
\newblock {\em Matroid theory}, volume~3.
\newblock Oxford University Press, USA, 2006.

\bibitem{schrijver2003combinatorial}
Alexander Schrijver et~al.
\newblock {\em Combinatorial optimization: polyhedra and efficiency},
  volume~24.
\newblock Springer, 2003.

\bibitem{yeung2008information}
Raymond~W Yeung.
\newblock {\em Information theory and network coding}.
\newblock Springer Science \& Business Media, 2008.

\end{thebibliography}

\section*{Appendix}
\begin{proof}[Proof of Proposition \ref{pro:submodular}]
\label{proof:submodularphi}
To show this, we must prove that for every subset $A \subseteq [n]$ and two distinct elements $i,j \notin A$ we have
\begin{equation}
    \phi_w(A \cup i) - \phi_w(A) \geq \phi_w(A \cup i \cup j) - \phi_w(A \cup j)
\end{equation}

Suppose that $\phi_w$ is not submodular. Let $A$ be a minimum size set such that (1) is not valid. That is 
$$ \phi_w(A \cup i) - \phi_w(A) < \phi_w(A \cup i \cup j) - \phi_w(A \cup j)$$
We cliam that $A$ is an independent set. In fact, if $C$ is a circuit in $A$ and $a \in C$ is the minimum weight element of $C$, then by  Lemma \ref{lemma:MWB}, $\phi_w(X) = \phi_w(X \backslash a)$ for $X = A, A \cup i , A \cup j, A \cup i \cup j$.
This shows that $A \backslash a $ also does not satisfy (1) which contradicts the minimality of $A$. Now, let $C_1, C_2$ be the unique circuits of $A \cup i$ and $A \cup j$, respectively.
\\ Let $a, b$ be the elements of $C_2, C_2$ of minimum weight, respectively.\\
Since $A$ is shown to be independent, we can see that $\phi_w(A \cup i) = \phi_w(A) + w_i - w_a$. Similarly, $\phi_w(A \cup j) = \phi_w(A) + w_j - w_b$. \\
To compute $\phi_w(A \cup i \cup j)$, first notice that $A \cup i \cup j$ contains at least two circuits $C_1, C_2$. In order to have an independent set in $A \cup i \cup j$, we must remove at least one element from each $C_i$. On the other hand, we may not eliminate both circuits by removing an element that is shared in both $C_1, C_2$ since the resulting set still contains a circuit.(See proposition \ref{prop:circuit}) Hence, $\phi_w(A \cup i \cup j) \leq \phi_w(A) + w_i + w_j - w_a - w_b$. By substituting these bounds in (1) we get a contradiction.

\end{proof}

\end{document}